\newtheorem{Thm}{Theorem}
\newtheorem{theorem}{Theorem}[section]
\newtheorem{Lem}[theorem]{Lemma}
\numberwithin{equation}{section}
\newcommand{\mb}{\mathbf}
\newcommand{\bb}{\mathbb}
\newcommand{\ms}{\mathscr}
\newcommand{\mr}{\mathrm}
\newcommand{\frk}{\mathfrak}
\begin{document}

\title{On the orbits of the magnetized Kepler problems in dimension $2k+1$ }

\author{Zhanqiang Bai, Guowu Meng and Erxiao Wang}

\address{Department of Mathematics, Hong Kong Univ. of Sci. and
Tech., Clear Water Bay, Kowloon, Hong Kong}

\email{mazqbai@ust.hk}

\address{Department of Mathematics, Hong Kong Univ. of Sci. and
Tech., Clear Water Bay, Kowloon, Hong Kong}

\email{mameng@ust.hk}

\address{ Wuhan Institute of Physics and Mathematics, Chinese Academy of Sciences}

\email{wangex@wipm.ac.cn}

\thanks{The  authors were all supported by the Hong Kong Research Grants Council under RGC Project No. 603110 and the Hong Kong University of Science and Technology under DAG S09/10.SC02.}
\thanks{The third author was also supported by the NSF of China (Grant Nos. 10941002, 11001262), and the Starting Fund for Distinguished Young Scholars of Wuhan Institute of Physics and Mathematics (Grant No. O9S6031001).
 }


\date{February 28, 2013}


\maketitle
\begin{abstract}
It is demonstrated that, for the recently introduced classical magnetized Kepler problems in dimension $2k+1$, the non-colliding orbits in the ``external configuration space" $\mathbb R^{2k+1}\setminus\{\mathbf 0\}$ are all conics, moreover,  a conic orbit is an ellipse, a parabola, and a branch of a hyperbola according as the total energy is negative, zero, and positive. It is also demonstrated that the Lie group ${\mr {SO}}^+(1,2k+1)\times {\bb R}_+$ acts transitively on both the set of oriented elliptic orbits and the set of oriented parabolic orbits. \end{abstract}

\tableofcontents

\section {Introduction}
The Kepler problem for planetary motions is a two-body dynamic problem with an attractive force obeying the inverse square law. Mathematically it can be reduced to the one-body dynamic problem with the equation of motion
\begin{eqnarray}\label{eqnofM}
{\mb r}''= -{{\mb r}\over r^3},
\end{eqnarray}
where ${\mb r}$ is a function of $t$ taking value in ${\bb R}^3_*:={\bb R}^3\setminus\{{\mb 0}\}$, ${\mb r}''$ is the acceleration vector and $r$ is the length of $\mb r$.  We shall refer to this later one-body dynamic problem as the {\em Kepler problem}.

\vskip 10pt
A surprising discovery due to D. Zwanziger \cite{Z68} and to H. McIntosh and A. Cisneros \cite{MC70} independently in the late 1960s is that there exist magnetized companions for the Kepler problem. These extra dynamic problems plus the Kepler problem, referred to as {\em MICZ-Kepler problems}, are indexed by the magnetic charge $\mu$, with $\mu=0$ for the Kepler problem. The parameter $\mu$ can take any real number at the classical mechanics level, a half of any integer at the quantum mechanics level. We would like to point out that, a MICZ-Kepler problem is the large distance limit of a system in which a spinless test particle moving outside the core of a self-dual $\mathrm{SU}(2)$ monopole, see Ref. \cite{Cordani90} and references therein.  

Since the Kepler problem has long been known to exist in all dimensions, one naturally wonders whether there are magnetized Kepler problems in higher dimensions. By realizing \cite{Iwai85} that the MICZ-Kepler problems are the $\mathrm{U}(1)$-symmetric reductions of the isotropic oscillator in space $\mathbb R^4=\mathbb C^2$, T. Iwai's \cite{Iwai90} obtained the magnetized Kepler problems in dimension five (referred to as the $\mathrm{SU}(2)$-Kepler problems), as the $\mathrm{SU}(2)$-symmetric reductions of the isotropic oscillator in space $\mathbb R^8=\mathbb H^2$. For quite a while the magnetized Kepler problems were thought to exist only in dimensions three, five and (possibly) nine, corresponding to the division algebras $\mathbb C$, $\mathbb H$ and $\mathbb O$ respectively. 

Recently, a family of Poisson realizations of $\mathfrak{so}(2, 2k+1)$ was found by the second-named author \cite{meng2012b} for any integer $k\ge 1$; as a by-product, there is now a family of the magnetized Kepler problems at the classical mechanics level in all odd dimensions. These magnetized Kepler problems are the MICZ-Kepler problems in dimension three and Iwai's $\mathrm{SU}(2)$-Kepler problems in dimension five. We would like to point out that, in dimension five or above, a complication arises: for $k>1$, when the magnetic charge is non-zero, instead of being $X:=\mathbb R^{2k+1}\setminus\{\mathbf 0\}$, the configuration space $P$ fibers over $X$, with the fiber being diffeomorphic to $\mathrm{SO}(2k)/\mathrm{U}(k)$. 

\vskip 10pt
While the orbits for the magnetized Kepler problems in dimension three have been thoroughly studied from the very beginning \cite{MC70}, in higher dimensions, in view of the fact that the equation of motion is a bit more sophisticated, one might expect that the orbits are a bit more sophisticated, too. That is probably the reason why the orbits for Iwai's $\mathrm{SU}(2)$-Kepler problems were never investigated in Ref. \cite{Iwai90} and the subsequent papers \cite{Iwai96}.

The main purpose of this article is to investigate the orbits for the magnetized Kepler problems in all odd dimensions. Since the bundle $P\to X$ has a canonical connection, to understand an orbit $C$ inside $P$, we just need to understand its projection $\bar C$ onto $X$ and then obtain the orbit $C$ as the lifting (via the canonical connection) of $\bar C$ to $P$. The hard part of this strategy is the realization that $\bar C$ must be a conic if $C$ is a non-colliding orbit. The breakthrough came when we found the effective angular momentum $\bar L$, which is a constant $2$-vector such that $\bar L\wedge \bar L =0$ and $\bar L$ reduces to the angular momentum $L$ when either the dimension is three or the magnetic charge is zero. With the orbits well-understood, we then extend the main result of Ref. \cite{meng2012} beyond dimension three. 

The important issue of geometric quantization, in the spirit of a series of papers \cite{Mladenov} by I. Mladenov and V. Tsanov, shall be carried out in the future.

\subsection{Outline} In section \ref{S: Kepler}, we quickly review the Kepler problem in arbitrary dimensions, with a focus on the non-colliding orbits and their light cone reformulation. In the next three sections, we review the magnetized Kepler problems in arbitrary odd dimensions, firstly the equation of motion in section \ref{S: MICZ}, then a very useful lemma in section \ref{S: Monopole}, and finally the total energy, the angular momentum and the Lenz vector in section \ref{S: Lenz}. In the remaining two sections, we state and prove our main results, cf. Theorem \ref{maintheorem1} in section \ref{S: Main1} and Theorem \ref{maintheorem2} in section \ref{S: Main2}.  

 \subsection{Notations and Conventions}
We are mainly dealing with poly-vectors in the euclidean space $\mathbb R^n$ or the Lorentz space $\mathbb R^{1, n}$, plus the wedge product, interior product, and inner product involving the poly-vectors. 

The boldface Latin letters are reserved for vectors in the euclidean space $\mathbb R^n$ only, and the inner product (i.e., the dot product) of the vectors $\mathbf A$ and $\mathbf B$ is written as $\mathbf A\cdot \mathbf B$. Vectors in the Lorentz space $\mathbb R^{1, n}$ are referred to as {\em Lorentz vectors}. For the Lorentz vectors $a=(a_0, \mathbf a)$ and $b=(b_0, \mathbf b)$,
the (Lorentz) inner product of $a$ and $b$, written as $a\cdot b$, is defined as $$a_0b_0-\mathbf a\cdot \mathbf b.$$

The vector $\mathbf r$ is reserved for a point in $\mathbb R^n$, and the Lorentz vector $x$ is reserved for a point in $\mathbb R^{1, n}$. We often write $x=(x_0, \mathbf r)$ rather than $(x_0, \mathbf x)$. For the standard basis vectors  
$e_0$, $e_1$, \ldots, $e_n$ in $\mathbb R^{1, n}$, we have $e_0\cdot e_0=1$ and $e_i\cdot e_i=-1$ for $i>0$. When we view the Lorentz $e_i$ ($i>0$) as a vector inside the subspace $\mathbb R^n$, we write it as $\mathbf e_i$.

Let $V$ be either $\mathbb R^n$ or $\mathbb R^{1, n}$, $k>0$ be an integer. A $k$-vector in $V$ is just an element of $\wedge^k V$. An 1-vector is just a vector. A $k$-vector is called {\em decomposable} if it is the wedge product of $k$ vectors. It is a trivial fact that a $2$-vector in a 3D space is always decomposable. In case $X$ is a non-zero decomposable $k$-vector in $V$, we use $[X]$ to denote the $k$-dimensional oriented subspace of $V$, with $X$ representing its orientation.

The inner product extend from vectors to poly-vectors and is denoted by $\langle\,, \rangle$. By definition, for vectors $u_1$, \ldots, $u_k$, $v_1$, \ldots, $v_k$ in $V$, let $[u_i\cdot v_j]$ be the square matrix whose $(i,j)$-entry is $u_i\cdot u_j$, then
$$
\langle u_1\wedge\cdots\wedge u_k, v_1\wedge\cdots\wedge v_k\rangle =\det [u_i\cdot v_j]. 
$$ 

We define the interior product $\lrcorner$ as the adjoint of the wedge product with respect to the inner product for poly-vectors: for poly-vectors $X$, $u$ and $v$ in $V$ with $\deg X+\deg u=\deg v$, we have
$$
\langle X\wedge u, v\rangle= \langle u, X\lrcorner v\rangle.
$$

For poly-vector $X$, we write $X^2$ for $\langle X, X \rangle$. In case $\langle X, X \rangle\ge 0$, we write $|X|$ for $\sqrt{\langle X, X\rangle}$. We always write $r$ for $|\mathbf r|$. Finally, we remark that a poly-vector $X$ in $\mathbb R^n$ is also viewed as a poly-vector $X$ in $\mathbb R^{1,n}$ in a natural way. 

\section{The Kepler problem in dimension $n$}\label{S: Kepler}
Taking the equation of the motion (\ref{eqnofM}) and replacing ${\bb R}^3_*$ by ${\bb R}^n_*:={\bb R}^n\setminus\{{\mb 0}\}$ with $n\ge 2$, we get an analogue of the Kepler problem in dimension $n$ (referred to as the Kepler problem in dimension $n$) in the sense that these analogues share the same characteristic feature as the Kepler problem:  existence of the Lenz vector.  Indeed, if we introduce
$$L:=\mb r\wedge \mb r', \quad \mb A := \mb r' \lrcorner L+{\mb r\over r},$$
one can easily check that both the angular momentum $L$ and the Lenz vector $\mb A$ are constants of motions. 

An orbit is a colliding orbit if and only if $L=0$. Note that we are only interested in the non-colliding orbits, so we assume hereon that $L\neq 0$. Since $L$ is decomposable, $L$ determines an oriented 2D subspace $[L]$ of $\mathbb R^n$. It is easy to see that $\mathbf A\wedge L=0$ and
\begin{eqnarray}\label{conics}
\mathbf r\wedge L=0, \quad r-\mathbf A\cdot \mathbf r=|L|^2.
\end{eqnarray}
Therefore, the non-colliding orbits are oriented conics inside the oriented 2D subspace $[L]$, with $0$ as a focus and  $\mathbf A$ as its eccentricity vector; moreover, one can show that the total energy for a motion with such an orbit is
$$
E=-{1-|\mathbf A|^2\over 2|L|^2}.
$$ Note that the orbit is oriented in the sense that $\mathbf t\wedge \mathbf n$ is a positive multiple of $L$. Here, $\mathbf t$ and $\mathbf n$ are respectively the unit tangent vector and unit normal vector of the oriented curve.

\subsection{The reformulation in the Lorentz space $\mathbb R^{1, n}$}
The Kepler problem in dimension $n$ has a mathematically appealing reformulation in the Lorentz space $\mathbb R^{1, n}$, as originally pointed out in Ref. \cite{meng2013}. To see this, we observe that $\mathbb R^n_*$ is diffeomorphic to the future light cone
$$
\{x\in \mathbb R^{1, n}\mid x^2=0, x_0>0\},
$$
so the Kepler problem in dimension $n$ can be reformulated as a dynamic problem on the future light cone. As a result, an oriented orbit in $\mathbb R^n_*$ can be reformulated as an oriented curve inside this future light cone, i.e., the intersection of the cylinder over the oriented orbit with the future light cone. It turns out this reformulated orbit is a conic section, and this gives another explanation why a non-colliding orbit must be a conic.

To see clearly the intersection plane for the conic section, we write $A$ for $(1, \mathbf A)$ and observe that $L\wedge A=L\wedge e_0\neq 0$.
For $x$ on the future light cone, Eq. (\ref{conics}) can be recast as
\begin{eqnarray}\label{conics1}
x\wedge (L\wedge A)=0, \quad A\cdot x =|L\wedge A|^2.
\end{eqnarray}  
Eq. (\ref{conics1}) defines an affine plane and its intersection with the future light cone is the same as the intersection of the cylinder over the orbit with the future light cone.

\section{Magnetized Kepler problems in Dimension $n=2k+1$}\label{S: MICZ}
The purpose of this section is to describe the equation of motion for the magnetized Kepler problems in odd dimension $n\ge 3$, as first appeared in Ref. \cite{meng2012b}. This equation is the $n$-dimensional analogue of the equation of motion
\begin{eqnarray}\label{EM}
{\mb r}'' = -{\mb r\over r^3}+\mu^2{\mb r\over r^4}- {\mb r}'\times \mu{\mb r\over r^3}
\end{eqnarray} for the MICZ-Kepler problems \cite{MC70, Z68}, where the parameter $\mu$ is the magnetic charge.

It turns out that the high dimensional analogue of Eq. (\ref{EM}) is far from straightforward. The reason is that, if $k>1$, instead of governing motions on $\mathbb R^{2k+1}_*$, the equation of motion governs motions on a manifold $P_\mu$ which fibers over $\mathbb R^{2k+1}_*$.

\vskip 5pt
To describe the fiber bundle $P_\mu\to \mathbb R^{2k+1}_*$, we let $G=\mathrm{SO}(2k)$ and consider the canonical principal $G$-bundle over $\mathrm{S}^{2k}$:
$$\begin{array}{c}
 \mathrm{SO}(2k+1)\cr
 \Big\downarrow\cr
 \mathrm{S}^{2k}.
\end{array}
 $$This bundle comes with a natural connection
$$\omega (g):=\mathrm{Pr}_{\frk{so}(2k)}\left(g^{-1}dg\right),$$
where $g^{-1}dg$ is the Maurer-Cartan form for $ \mathrm{SO}(2k+1)$, so it is an $\frk{so}(2k+1)$-valued differential one form on $\mathrm{SO}(2k+1)$, and $\mathrm{Pr}_{\frk{so}(2k)}$ denotes the orthogonal projection of $\frk{so}(2k+1)$ onto $\frk{g}:=\frk{so}(2k)$.

Under the map
\begin{eqnarray}
\pi:  \mathbb R^{2k+1}_* & \to & \mathrm{S}^{2k}\cr
 \mb  r &\mapsto &{\mb r \over r},
\end{eqnarray}
the above bundle and connection are pulled back to a principal $G$-bundle
\begin{eqnarray}
\begin{array}{l}
P\cr
 \Big\downarrow\cr
X:=\mathbb R^{2k+1}_*
\end{array}
\end{eqnarray} with a connection which is usually referred to as the generalized Dirac monopole \cite{monopole}. Now $$P_\mu\to \mathbb R^{2k+1}_*$$ is the associated fiber bundle with fiber being a certain co-adjoint orbit $\mathcal O_\mu$ of $G$, the so-called magnetic orbit  with magnetic charge $\mu\in \mathbb R$.

To describe $\mathcal O_\mu$, let us use $\gamma_{ab}$ ($1\le a, b\le 2k$) to denote the element of $i\frk{g}$ such that in the defining representation of $\frk{g}$, $M_{a,b}:=i\gamma_{ab}$ is represented by the skew-symmetric real symmetric matrix whose $a b$-entry is $-1$, $b a$ entry is $1$, and all other entries are $0$. For the invariant metric $(, )$ on $\frk{g}$, we take the one such that $M_{a,b}$ ($1\le a< b\le 2k$) form an orthonormal basis for $\frk g$. Via this invariant metric, one can identify $\frk{g}^*$ with $\frk{g}$, hence co-adjoint orbits with adjoint orbits. By definition, for any $\mu\in \mathbb R$, 
\begin{eqnarray}
\mathcal O_\mu := \mr{SO}(2k)\cdot {1\over \sqrt k}(|\mu|M_{1,2}+\cdots + |\mu| M_{2k-3, 2k-2} + \mu M_{2k-1, 2k} ).
\end{eqnarray} It is easy to see that $\mathcal O_\mu=\{0\}$ if $\mu=0$ and is diffeomorphic to $\mr{SO}(2k)\over \mr{U}(k)$ if $\mu\neq 0$.

\vskip 5pt
We are now ready to describe the equation of motion for the magnetized Kepler problem in dimension $2k+1$. Let $\mathbf r$: $\mathbb R\to X$ be a smooth map, and $\xi$ be a smooth lifting of $\mathbf r$:
\begin{eqnarray}
\begin{array}{ccc}
 & & P_\mu\\
 \\
 & \xi \nearrow &\Big\downarrow\\
 \\
 \mathbb R &\buildrel\mathbf r\over\longrightarrow & X
\end{array}
\end{eqnarray}
Let $Ad_P$ be the adjoint bundle $P\times_G\frk{g}\to X$, $d_\nabla$ be the canonical connection, i.e., the generalized Dirac monopole on $\mathbb R^{2k+1}_*=X$. Then the curvature $\Omega:=d_\nabla^2$ is a smooth section of the vector bundle $\wedge^2T^*X\otimes Ad_P$. (With a trivialization of $P\to X$, locally $\Omega$ can be represented by ${1\over 2}\sqrt{-1}F_{jk}\, dx^j\wedge dx^k$.)
The equation of motion is
\begin{eqnarray}\label{EqnMF}
\fbox{$\left\{
\begin{array}{l}
\mathbf r'' = -{\mathbf r\over r^3}+{\mu^2\over k} {\mathbf r\over r^4}+ (\xi,\mathbf r'\lrcorner \Omega),\cr
\\
{D\over dt}\xi =0.
\end{array}
\right.$}
\end{eqnarray} Here ${D\over dt}\xi$ is the covariant derivative of $\xi$, $(, )$ refers to the inner product on the fiber of the adjoint bundle coming from the invariant inner product on $\frak g$, and 2-forms are identified with 2-vectors via the standard euclidean structure of $\mathbb R^{2k+1}$. Eq. (\ref{EqnMF}) defines a super integrable model, referred to as the {\bf classical Kepler problem with magnetic charge $\mu$ in dimension $2k+1$}, which generalize the classical MICZ-Kepler problem. Indeed, in dimension $3$, the bundle is topological trivial, $\xi=\mu M_{12}$, and $\Omega = {*(\sum_{i=1}^3 x^i\, dx^i)\over r^3} M_{12}$, then Eq. (\ref{EqnMF}) reduces to Eq. (\ref{EM}), i.e., the equation of motion for the MICZ-Kepler problem with magnetic charge $\mu$.  In dimension $5$, it is essentially Iwai's $\mr{SU}(2)$-Kepler problem, cf. Ref. \cite{Iwai90}.

The equation of motion appears to be mysterious, but it doesn't. As demonstrated in Ref. \cite{meng2012b}, with a key input from the work of Sternberg \cite{Sternberg77}, Weinstein \cite{Weinstein78}, and Montgomery \cite{Montgomery84}, it emerges naturally from the notion of universal Kepler problem in Ref. \cite{meng11a}. As a side remark, we would like to point out that the quantum magnetized Kepler problems \cite{meng07} were obtained much earlier. 

\section{A useful lemma on the generalized Dirac monopoles}\label{S: Monopole}
This article as well as Refs. \cite{meng07, meng2012b} crucially depend on a lemma about the generalized Dirac monopoles \cite{monopole}. The purpose of this section is to introduce an abridged version of this lemma as used in Ref. \cite{meng2012b}.

We shall write $\mathbf r =(x^1, \ldots, x^n)$ for a point in $\mathbb R^n_*$ and $r$ for the length of $\mathbf  r$. The small Lartin letters $j$, $k$, etc. be indices that run from $1$ to $n$, and the small Latin letters $a$, $b$, etc. be indices that run from $1$ to $n-1$. To do local computations, we need to choose a bundle trivialization on $U$ which is $\mathbb R^n$ with the negative $n$-th axis removed and then write down the gauge potential explicitly. Indeed, there is a bundle trivialization on $U$ such that the gauge potential $A=A_k\, dx^k$ can be written as
\begin{eqnarray}\label{mnple}
A_n =0,\quad A_b=-{1\over r(r+x_n)}x^a\gamma_{ab}
\end{eqnarray}
where $x^a\gamma_{ab}$ means $\sum_{a=1}^{n-1}x^a\gamma_{ab}$, something we shall assume whenever there is a repeated index.  It is then clear that the gauge field strength
$F_{jk}: = \partial_jA_k-\partial_k A_j + i [A_j, A_k]$ is of the form
\begin{eqnarray}\label{Ffield}
F_{nb} = {1\over r^3}x^a\gamma_{ab},\quad
F_{ab} = -{1\over r^2}(\gamma_{ab} +x^aA_b - x^bA_a).
\end{eqnarray}

The following lemma is an abridged version of Lemma 4.1 in Ref. \cite{meng2012b}.
\begin{Lem}\label{lemma}
Assume $n=2k+1$. Let $Q={1\over 2k}\sum_{a, b}(\gamma_{ab})^2$ and $\nabla_k=\partial_k+iA_k$. For the gauge potential $A$ defined in Eq. (\ref{mnple}),  the following statements are true.

1) As $i\,\frk{so}(2k)$-valued functions on $U$,
\begin{eqnarray}\label{Id1}
x^k A_k=0,\quad x^j F_{jk}=0,\quad \nabla_l F_{jk}={1\over r^2}\left( -x^j
F_{lk}-x^k F_{jl}-2x^l F_{jk} \right).
\end{eqnarray}
Consequently, we have $\mathbf r\lrcorner F=0$ and ${D\over dt} (r^2F)=-\mathbf r\wedge (\mathbf r'\lrcorner F)$.

2) Assume $\xi\in \mathcal O_\mu\subset \frk{g}$. As real functions on $U$,
\begin{eqnarray}\label{Id3}
r^4\sum_k( i\xi, F_{kj})(i \xi, F_{kj'})={\mu^2\over k}\left(\delta_{jj'}-{x^j x^{j'}\over
r^2}\right).
\end{eqnarray}  Consequently, we have $|(i\xi, r^2F)|^2=\mu^2$ and $$(i\xi, \mathbf r' \lrcorner F)\lrcorner (i\xi, r^2F)=-{\mu^2/k\over r}\left({\mathbf r\over r}\right)', \quad |r^2(i\xi, \mathbf r' \lrcorner F)|^2={\mu^2\over k}{|\mathbf r\wedge \mathbf r'|^2\over r^2}.$$
\end{Lem}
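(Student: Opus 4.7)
The plan is to reduce every statement to a direct calculation using the explicit formulas (\ref{mnple}) and (\ref{Ffield}) for $A$ and $F$, isolating the nontrivial algebraic input as a single identity about the orbit $\mathcal{O}_\mu$.

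Part 1 is essentially a bookkeeping exercise in the antisymmetry $\gamma_{ab}=-\gamma_{ba}$. The vanishing of $x^k A_k$ is immediate because $A_n=0$ and $x^a x^b \gamma_{ab}$ pairs a symmetric with an antisymmetric tensor. The vanishing of $x^j F_{jk}$ splits on $k=n$ versus $k$ Latin; both cases collapse by the same mechanism, with the $x^n F_{nc}$ term canceling the bare $\gamma_{ac}$ part of $x^a F_{ac}$ in the Latin case and the remaining $A$-dependent piece vanishing by the same pairing. The derivative identity is obtained by differentiating (\ref{Ffield}) directly, with the commutator $i[A_l, F_{jk}]$ inside $\nabla_l$ accounting for the $-2x^l F_{jk}/r^2$ contribution. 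The two consequences of Part 1 are then formal: $\mathbf{r}\lrcorner F=0$ just restates $x^j F_{jk}=0$, and writing $\frac{D}{dt}(r^2 F_{jk}) = 2(\mathbf{r}\cdot\mathbf{r}') F_{jk} + r^2 (\mathbf{r}')^l \nabla_l F_{jk}$ and inserting the derivative identity produces a cancellation between $2(\mathbf{r}\cdot\mathbf{r}')F_{jk}$ and the $-2x^l F_{jk}$ term, with the remaining two terms reassembling into $-\mathbf{r}\wedge(\mathbf{r}'\lrcorner F)$.

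For Part 2, introduce the auxiliary skew-symmetric matrix $\phi_{ab} := (i\xi, \gamma_{ab})$ and the vector $v_b := x^a \phi_{ab}$ for $a,b\in\{1,\ldots,2k\}$. A short check of conventions shows that $\phi_{ab}$ agrees with $-\xi_{ab}$, the entries of $\xi$ in the defining representation of $\mathfrak{so}(2k)$, so the only algebraic input needed is the orbit identity $\xi^2 = -\frac{\mu^2}{k} I_{2k}$ valid for every $\xi\in\mathcal{O}_\mu$. At the standard representative $\eta_0 = \frac{1}{\sqrt{k}}\sum_i \alpha_i M_{2i-1,2i}$ with $\alpha_i^2 = \mu^2/k$, the blocks $M_{2i-1,2i}$ have disjoint $2\times 2$ support and each squares to a rank-two diagonal with coefficient $-1$; the identity then transports to the whole orbit by $\mathrm{SO}(2k)$-invariance. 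Three auxiliary identities drop out: $\sum_a v_a^2 = \frac{\mu^2}{k}(r^2 - x_n^2)$, $\sum_a x^a v_a = 0$, and $\sum_a \phi_{ac} v_a = -\frac{\mu^2}{k} x^c$. With these in hand, (\ref{Id3}) is verified by expanding $(i\xi, r^2 F_{jk})$ via (\ref{Ffield}) and checking the three cases $(j,j') = (n,n)$, $(n,c)$, and $(b,c)$.

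The three stated consequences of Part 2 are all contractions of (\ref{Id3}): setting $j = j'$ and summing over $j$ gives $2|(i\xi, r^2 F)|^2 = \frac{\mu^2}{k}\cdot 2k$, hence $|(i\xi, r^2 F)|^2 = \mu^2$; contracting with a single $(\mathbf{r}')^j$ together with $(\mathbf{r}/r)' = \frac{1}{r}(\mathbf{r}' - \frac{\mathbf{r}\cdot\mathbf{r}'}{r^2}\mathbf{r})$ yields the vector equality for $(i\xi,\mathbf{r}'\lrcorner F)\lrcorner(i\xi,r^2F)$; and contracting with $(\mathbf{r}')^j(\mathbf{r}')^{j'}$ together with the Lagrange identity $|\mathbf{r}\wedge\mathbf{r}'|^2 = r^2|\mathbf{r}'|^2 - (\mathbf{r}\cdot\mathbf{r}')^2$ yields the final norm identity. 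The main obstacle will be the case $(b,c)$ of (\ref{Id3}) itself: the cross terms with denominator $(r+x_n)^2$ arising from the $A$-contribution to $F_{ab}$ have to cancel against the factor $r^2 - x_n^2 = (r-x_n)(r+x_n)$ hidden in $\sum_a v_a^2$, and getting the algebra to collapse cleanly to $\frac{\mu^2}{k}(\delta_{bc} - \frac{x^b x^c}{r^2})$ is the only step that is more than mechanical.
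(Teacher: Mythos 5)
Your proposal is correct, but note that the paper itself offers no proof of this lemma: it is imported verbatim as ``an abridged version of Lemma 4.1'' of the reference \cite{meng2012b}, so there is no in-paper argument to compare against. Your self-contained computational route is sound, and you have correctly isolated the one genuinely structural input, namely that every $\xi\in\mathcal O_\mu$ satisfies $\xi^2=-\frac{\mu^2}{k}I_{2k}$ in the defining representation (verified at the block-diagonal representative and transported by conjugation invariance); everything else in Part 2 is then linear algebra with $\phi_{ab}$ and $v_b=x^a\phi_{ab}$, and your three auxiliary identities $\sum_a v_a^2=\frac{\mu^2}{k}(r^2-x_n^2)$, $x^av_a=0$, $\phi_{ac}v_a=-\frac{\mu^2}{k}x^c$ are exactly what is needed to make the $(b,c)$ case of Eq.~(\ref{Id3}) collapse, since $r^2-x_n^2=(r-x_n)(r+x_n)$ kills the $(r+x_n)^{-2}$ denominators coming from the $A$-terms in $F_{ab}$. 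The Part 1 computations and the three contractions you use to deduce the consequences of Eq.~(\ref{Id3}) all check out. One small slip to fix in the write-up: you cannot have both the prefactor $\frac{1}{\sqrt k}$ in $\eta_0=\frac{1}{\sqrt k}\sum_i\alpha_iM_{2i-1,2i}$ and $\alpha_i^2=\mu^2/k$, as that double-counts the factor of $k$ and would give $\eta_0^2=-\frac{\mu^2}{k^2}I_{2k}$; take $\alpha_i^2=\mu^2$ with the prefactor (matching the paper's representative), or drop the prefactor and set $\alpha_i^2=\mu^2/k$.
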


\section{The angular momentum and the Lenz vector}\label{S: Lenz}
Besides the equation of motion (\ref{EqnMF}), Ref. \cite{meng2012b} also provides formulae for the total energy
\begin{eqnarray}\label{hamiltonian}
E={1\over 2}|\mathbf r'|^2-{1\over r}+{\mu^2/k\over 2r^2},
\end{eqnarray} the angular momentum
\begin{eqnarray}
L=\mathbf r\wedge \mathbf r'+ (\xi,r^2\Omega)
\end{eqnarray} and the Lenz vector
\begin{eqnarray}
\mathbf A= \mathbf r' \lrcorner L+{\mathbf r\over r}.
\end{eqnarray}

To verify that $L$ and $\mathbf A$ are constants of motion directly, it suffices to do local computations over $U$. That is because $U$ is $\mathbb R^{2k+1}$ with a so called Dirac string (i.e., the negative $n$-th coordinate axis) removed, and, for a given non-colliding orbit, by dimension reason we may assume it misses the entire $n$-the coordinate axis.

Over the dense open set $U$, the bundle $P\to X$ shall be trivialized in a way so that the gauge potential is of the form as in equation (\ref{mnple}), so that the curvature $\Omega$ shall be represented by $iF$ and the lifting $\xi$ shall be represented by a smooth map (also denoted by $\xi$) from $\mathbb R$ into $\frk g$ whose image is always inside $\mathcal O_\mu$.  With this understood, one can verify directly that $L$ and $\mathbf A$ are constants of motions. For example, use the lemma and the equation of motion, since $L=\mathbf r\wedge \mathbf r'+(i\xi, r^2F)$, we have
\begin{eqnarray}
L' &= & \mathbf r\wedge \mathbf r''+(i\xi, -\mathbf r\wedge (\mathbf r' \lrcorner F))\cr
&= & \mathbf r\wedge (\mathbf r''-(\xi, \mathbf r' \lrcorner \Omega))\cr
&=&  \mathbf r\wedge \left(-{\mathbf r\over r^3}+{\mu^2\over k} {\mathbf r\over r^4}\right) =0,\nonumber
\end{eqnarray}
similarly, we have
\begin{eqnarray}
\mathbf A' &= & \mathbf r''\lrcorner L+\left({\mathbf r\over r}\right)'\cr
&=&  \left(-{\mathbf r\over r^3}+{\mu^2\over k} {\mathbf r\over r^4}+ (i\xi,\mathbf r'\lrcorner F)\right)\lrcorner \left(\mathbf r\wedge \mathbf r'+ (i\xi,r^2F)\right)+\left({\mathbf r\over r}\right)'\cr
&=&\left (-{1\over r^3}+{\mu^2\over k} {1\over r^4}\right) \mathbf r\lrcorner (\mathbf r\wedge \mathbf r')+ (i\xi,\mathbf r'\lrcorner F)\lrcorner (i\xi,r^2F)+\left({\mathbf r\over r}\right)'\cr
&=&\left (-1+{\mu^2\over k} {1\over r}\right) \left({\mathbf r\over r}\right)'-{\mu^2/k\over r}\left({\mathbf r\over r}\right)'+\left({\mathbf r\over r}\right)'\cr
&=& 0.\nonumber
\end{eqnarray}

For a non-colliding orbit, since $\mathbf r\wedge \mathbf r'\neq 0$, use the lemma, we have
\begin{eqnarray}\label{Lmu}
|L|^2= |\mathbf r\wedge \mathbf r'|^2+ |(i\xi, r^2F)|^2=  |\mathbf r\wedge \mathbf r'|^2+ \mu^2> \mu^2.
\end{eqnarray}
It has also been shown in Ref. \cite{meng2012b} that, for a non-colliding orbit,  the total energy is completely determined by the angular momentum $L$ and the Lenz vector $\mathbf A$:
\begin{eqnarray}\label{formulaE}
E=-{1-|\mathbf A|^2\over 2(|L|^2-\mu^2)}.
\end{eqnarray}
To see this, we note that
\begin{eqnarray}
|\mathbf A|^2 &=& 1+{2\over r}\mathbf r \lrcorner(\mathbf r' \lrcorner L)+|\mathbf r' \lrcorner L|^2\cr
&=& 1+{2\over r} \langle\mathbf r'\wedge \mathbf r, L\rangle+|(\mathbf r\cdot \mathbf r')\mathbf r'-(\mathbf r'\cdot \mathbf r')\mathbf r +r^2(i\xi, \mathbf r' \lrcorner F)|^2\cr
&=& 1-{2\over r} |\mathbf r\wedge \mathbf r'|^2+|(\mathbf r\cdot \mathbf r')\mathbf r'-(\mathbf r'\cdot \mathbf r')\mathbf r |^2+|r^2(i\xi, \mathbf r' \lrcorner F)|^2\cr
&=& 1-{2\over r} |\mathbf r\wedge \mathbf r'|^2+|\mathbf r'|^2 |\mathbf r\wedge \mathbf r'|^2+{\mu^2\over k}{|\mathbf r\wedge \mathbf r'|^2\over r^2}\cr
&=& 1+2|\mathbf r\wedge \mathbf r'|^2 E \quad \mbox{By Eq. (\ref{hamiltonian})}\cr
&= & 1+2(|L|^2-\mu^2)E.\quad \mbox{By identity (\ref{Lmu})}\nonumber
\end{eqnarray}

\section{The MICZ-Kepler orbits are all conics}\label{S: Main1}
In this and next sections we shall study the orbits of magnetized Kepler problems in a {\em fixed} dimension $2k+1$, but with arbitrary magnetic charges. In dimension higher than $3$, the angular momentum is no longer decomposable, and that makes our work in higher dimensions more sophisticated than it might appear. 

A solution $(\mathbf r (t), \xi(t))$ to the equation of motion (\ref{EqnMF}) shall be referred to as a {\em motion}, whose total trace inside $P_\mu$ shall be referred to as an {\em orbit}. Under the bundle projection $\pi_{\mu}: P_\mu\to\mathbb R^{2k+1}_*$, these orbits become curves inside the ``external configuration space" $\mathbb R^{2k+1}_*$, with the non-colliding ones being referred to as {\em MICZ-Kepler orbits}. In this section we shall show that the MICZ-Kepler orbits are all conics. As demonstrated in section \ref{S: Kepler}, that is indeed the case when the magnetic charge is zero, so we shall assume that $\mu\neq 0$ in this section, unless said otherwise. 

Let $(\mathbf r (t), \xi(t))$ be a motion that represents a MICZ-Kepler orbit, and 
\begin{eqnarray}V:= \mathbf r\wedge\mathbf r'\wedge r^3(i\xi, \mathbf r'\lrcorner F) .
\end{eqnarray} Since both $\mathbf r$ and $\mathbf r'$ are orthogonal to $r^3(i\xi, \mathbf r'\lrcorner F)$, we have
\begin{eqnarray}\label{Vnormsq}
|V|^2=|\mathbf r\wedge\mathbf r'|^2\; |r^3(i\xi, \mathbf r'\lrcorner F)|^2={\mu^2\over k}|\mathbf r\wedge\mathbf r'|^4={\mu^2\over k}(|L|^2-\mu^2)^2
\end{eqnarray} by Lemma \ref{lemma} and Eq. (\ref{Lmu}). Therefore $|V|$ is a constant of motion. In view of Eq. (\ref{Lmu}), it vanishes if and only if $\mu= 0$.

\begin{Lem} The $3$-vector $V$ in $\mathbb R^{2k+1}$ is a constant of motion, and it vanishes if and only if $\mu=0$. 
\end{Lem}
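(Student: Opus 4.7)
The proof splits in two.

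\textbf{Vanishing.} The second assertion is immediate from the norm computation (\ref{Vnormsq}) just established: since $|V|^2 = \frac{\mu^2}{k}(|L|^2-\mu^2)^2$ and the non-colliding condition forces $|L|^2 > \mu^2$ via (\ref{Lmu}), $V$ vanishes if and only if $\mu = 0$.

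\textbf{Constancy, setup.} Set $u := (i\xi,\mathbf{r}'\lrcorner F)$ and $W := r^3 u$, so that $V = \mathbf{r}\wedge\mathbf{r}'\wedge W$, and differentiate by Leibniz. The $\mathbf{r}'\wedge\mathbf{r}'\wedge W$ piece vanishes automatically. For the middle piece $\mathbf{r}\wedge\mathbf{r}''\wedge W$, I would substitute the equation of motion (\ref{EqnMF}): the two central-force summands of $\mathbf{r}''$ are parallel to $\mathbf{r}$ and hence killed by the prefactor $\mathbf{r}\wedge$, leaving only $\mathbf{r}\wedge u\wedge W = r^3\,\mathbf{r}\wedge u\wedge u = 0$. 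The whole task therefore reduces to showing that $W'$ lies in $\mathrm{span}(\mathbf{r},\mathbf{r}')$, so that the remaining term $\mathbf{r}\wedge\mathbf{r}'\wedge W'$ also vanishes.

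\textbf{The $W'$ calculation, the main obstacle.} I would express $W$ as $r(\mathbf{r}'\lrcorner M)$ via the identity $\mathbf{r}'\lrcorner M = r^2 u$, where $M := (i\xi,r^2 F) = L - \mathbf{r}\wedge\mathbf{r}'$; the constancy of $L$, combined with the reduction $\mathbf{r}\wedge\mathbf{r}'' = \mathbf{r}\wedge u$ above, gives $M' = -\mathbf{r}\wedge u$. The product rule then yields three contributions, to be simplified by three inputs from Lemma \ref{lemma} and basic antisymmetry: first, $\mathbf{r}\lrcorner M = 0$ (inherited from $\mathbf{r}\lrcorner F = 0$ in part 1) kills the central-force part of $\mathbf{r}''\lrcorner M$; second, $u\lrcorner M = -\frac{\mu^2/k}{r}(\mathbf{r}/r)'$ (the workhorse from part 2) supplies the leading term; third, $\mathbf{r}'\cdot u = 0$ (from antisymmetry of $F$) collapses $\mathbf{r}'\lrcorner M'$ to $-(\mathbf{r}\cdot\mathbf{r}')u$. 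After these substitutions, the two terms proportional to $(\mathbf{r}\cdot\mathbf{r}')u$ cancel exactly, producing $W' = -\frac{\mu^2}{k}(\mathbf{r}/r)'$, manifestly in $\mathrm{span}(\mathbf{r},\mathbf{r}')$. I expect the only real difficulty to be the bookkeeping in this final step; it is driven entirely by the contractions supplied by Lemma \ref{lemma}. A satisfying by-product worth flagging is that $W + \frac{\mu^2}{k}\frac{\mathbf{r}}{r}$ is itself a constant vector of motion.
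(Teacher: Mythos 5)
Your proof is correct and follows essentially the same route as the paper's: both differentiate $V$ by the product rule, kill the $\mathbf r\wedge\mathbf r''$ term via the equation of motion, and reduce the remaining terms to the contraction identities of Lemma \ref{lemma}, with the same cancellation of the $r^2r'(i\xi,\mathbf r'\lrcorner F)$ contributions. Your packaging is marginally sharper in that you isolate the stronger (and correct) statement $W'=-\frac{\mu^2}{k}\left(\frac{\mathbf r}{r}\right)'$, i.e.\ that $r^3(i\xi,\mathbf r'\lrcorner F)+\frac{\mu^2}{k}\frac{\mathbf r}{r}$ is itself a conserved vector, whereas the paper only verifies the weaker fact that $\mathbf r\wedge\mathbf r'\wedge W'=0$.
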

\begin{proof} Using the product rule for differentiation, we have
\begin{eqnarray}
V'&=&  \mathbf r\wedge\mathbf r''\wedge r^3(i\xi, \mathbf r'\lrcorner F) +  \mathbf r\wedge\mathbf r'\wedge ((r  \mathbf r')'\lrcorner (i\xi, r^2F))+ \mathbf r\wedge\mathbf r'\wedge (r \mathbf r' \lrcorner (i\xi, r^2F)')\cr
&=& \mathbf r\wedge\mathbf r'\wedge ((r  \mathbf r')'\lrcorner (i\xi, r^2F))+ \mathbf r\wedge\mathbf r'\wedge (r\mathbf r' \lrcorner (i\xi, {D\over dt}(r^2F)))\quad \mbox{by eqn of motion}\cr
&=& \mathbf r\wedge\mathbf r'\wedge ((r  \mathbf r')'\lrcorner (i\xi, r^2F))+ \mathbf r\wedge\mathbf r'\wedge  (r\mathbf r' \lrcorner (i\xi, -\mathbf r\wedge (\mathbf r'\lrcorner F)))\quad \mbox{by Lemma \ref{lemma}}\cr
&=& \mathbf r\wedge\mathbf r'\wedge ((r  \mathbf r')'\lrcorner (i\xi, r^2F))-\mathbf r\wedge\mathbf r'\wedge r^2r' (i\xi, \mathbf r'\lrcorner F) \cr
&=& \mathbf r\wedge\mathbf r'\wedge (r  \mathbf r''\lrcorner (i\xi, r^2F))\cr
&=& \mathbf r\wedge\mathbf r'\wedge (r (i\xi, \mathbf r'\lrcorner F))\lrcorner (i\xi, r^2F))\quad \mbox{by eqn of motion and Lemma \ref{lemma}}\cr
&=& -{\mu^2\over k}\mathbf r\wedge\mathbf r'\wedge \left({\mathbf r\over r}\right)'\quad \mbox{by Lemma \ref{lemma}}\cr
&=& 0.\nonumber
\end{eqnarray}
The rest is clear.
\end{proof}

The nonzero constant decomposable $3$-vector $V$ determines a constant subspace $[V]$ of $\mathbb R^{2k+1}$. The presence of this 3D space $[V]$ is not mysterious because one can show that it is spanned by the three constant vectors $\mathbf{A}$, $\mathbf{A}\lrcorner L$, and $(\mathbf{A}\lrcorner L)\lrcorner L$ when $\mathbf{A}\lrcorner L\neq 0$. Anyhow, it is straightforward to see that $\mathbf A$ is a vector inside the 3D space $[V]$. Since $\mathbf r\wedge V=0$, we know that the MICZ-Kepler orbit is inside the 3D space $[V]$, in fact a conic inside $[V]$, as we shall see in a moment.

Let $\bar L$ be the image of $L$ under the orthogonal projection $\wedge^2 {\mathbb R}^{2k+1}\to \wedge^2 [V]$. Being referred to as the {\bf effective angular momentum}, $\bar L$ shall be seen to play an important role in the study of MICZ-Kepler orbits. 
A simple computation shows that
\begin{eqnarray}
{\bar L}=\left(\mathbf{r}-\frac{r^4}{|L|^2-\mu^2} (i\xi, \mathbf{r}'\lrcorner F)\right)\wedge \left(\mathbf{r}'-\frac{r'}{r} \mathbf{r}\right),
\end{eqnarray}  a decomposable $2$-vector inside the 3D space $[V]$. With the help of Lemma \ref{lemma} and Eq. (\ref{Vnormsq}), one can verify that
\begin{eqnarray}\label{Lbar}
\bar L\wedge \mathbf A ={1\over |L|^2-\mu^2}V, \quad |\bar L|^2 - |\bar L\wedge \mathbf A|^2=|\bar L|^2-{\mu^2\over k}=|L|^2-\mu^2>0.
\end{eqnarray}
From the definition, $\bar L$ should be a constant of motion, a fact which can be verified directly. This fact holds even when $\mu=0$ because $\bar L=L$ in this special case. By some simple computations, one can verify that $\mathbf r\in [V]$ and
\begin{eqnarray} \label{orbit}
r-{\mb A}\cdot {\mb r} = \bar L^2-{\mu^2\over k}, \quad \bar L\wedge ({\mb r} - r\mathbf A)=0.
\end{eqnarray}  

In the following we let $\ms O$ denote the set of all oriented MICZ-Kepler orbits and $\ms M$ denote the set of all pairs $(\mathbf A, \bar L)$ where $\bar L$ is a decomposable 2-vector and $\mathbf A$ is a vector (all inside $\mathbb R^{2k+1}$) such that $|\bar L|^2 > |\bar L\wedge \mathbf A |^2 $. For each
$(\mathbf A, \bar L)\in \ms M$, we claim that $|\bar L- {\mathbf A}\lrcorner({\mathbf A}\wedge \bar L)|>0$, otherwise, $\bar L= {\mathbf A}\lrcorner({\mathbf A}\wedge \bar L)$, after taking the inner product with $\bar L$, we  would have $|\bar L|^2 = |\bar L\wedge \mathbf A |^2 $, a contradiction.

The following theorem extends Theorem 1 of Ref. \cite{meng2012} beyond dimension 3.
\begin{Thm} \label{maintheorem1}

(1) If two motions have the same oriented MICZ-Kepler orbit, then they must have the same effective angular momentum $\bar L$ and the same Lenz vector $\mathbf A$. Consequently one can speak of an oriented MICZ-Kepler orbit with the Lenz vector $\mathbf A$ and the effective angular momentum $\bar L$.

(2) The map $$\varphi_1: \quad \ms O\to \ms M$$ which maps an oriented MICZ-Kepler orbit with the Lenz vector $\mathbf A$ and the effective angular momentum $\bar L$ to $(\mathbf A, \bar L)$ is a bijection. 

(3) An oriented MICZ-Kepler orbit with the Lenz vector $\mathbf A$ and the effective angular momentum $\bar L$ is a conic with its eccentricity $e$ satisfying identity
\begin{eqnarray}
1-e^2={|\bar L|^2- |\bar L\wedge \mathbf A |^2 \over |\bar L-{\mathbf A}\lrcorner({\mathbf A}\wedge \bar L)|^2}(1-|\mb A|^2).
\end{eqnarray}
Moreover, this MICZ-Kepler orbit is oriented in the sense that  $\mathbf t\wedge \mathbf n$ is a positive multiple of $(\bar L-{\mathbf A}\lrcorner({\mathbf A}\wedge \bar L))$. (Here, $\mathbf t$ and $\mathbf n$ are respectively its unit tangent and unit normal vector.) Consequently, reversing the orbit orientation amounts to
turning $(\mathbf A, \bar L)$ into $(\mathbf A, -\bar L)$.

(4) Fix an oriented MICZ-Kepler orbit with the Lenz vector $\mathbf A$ and the effective angular momentum $\bar L$, the total energy for any motion with this oriented MICZ-Kepler orbit is  
\begin{eqnarray}
E=-{1- |\mathbf A|^2\over 2(|\bar L|^2 -|\bar L\wedge \mathbf A|^2)}.
\end{eqnarray}  Consequently one can speak of the total energy $E$ of a MICZ-Kepler orbit. 

(5) A MICZ-Kepler orbit is an ellipse, a parabola and a branch of a hyperbola according as its total energy $E$ is negative, zero and positive.

\end{Thm}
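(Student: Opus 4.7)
The theorem has five parts; my plan is to prove (3) first as the main geometric step, then (4) and (5) as quick consequences, and finally (1) and (2).

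For (3), the key is to recognize the orbit as a genuine conic section of a right circular cone inside the $3$D subspace $[V]$. Decompose $\mathbf A=\mathbf A_\|+\mathbf A_\perp$ with respect to the $2$D subspace $[\bar L]\subset [V]$. The condition $\bar L\wedge(\mathbf r-r\mathbf A)=0$ from (\ref{orbit}) forces $\mathbf r=\mathbf r_\|+r\mathbf A_\perp$ with $\mathbf r_\|\in [\bar L]$, and substituting into $r^2=|\mathbf r|^2$ yields the cone equation
$$
|\mathbf r_\||^2=r^2(1-|\mathbf A_\perp|^2),
$$
a right circular cone in $[V]$ with axis direction $\mathbf A_\perp$ and half-angle $\alpha$ satisfying $\cos\alpha=|\mathbf A_\perp|$. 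The second equation of (\ref{orbit}) becomes, after eliminating $r$, an affine-linear equation in $[V]$, i.e.\ a cutting plane. The orbit is thus a conic section, and I would compute its eccentricity by the standard cone/plane formula, matching the result with the claimed ratio after showing that $|\bar L-\mathbf A\lrcorner(\mathbf A\wedge\bar L)|^2$ splits orthogonally in $\wedge^2[V]$ into a $\wedge^2[\bar L]$-component of squared norm $(1-|\mathbf A_\perp|^2)^2|\bar L|^2$ and an $\mathbf A_\perp\wedge[\bar L]$-component of squared norm $|\mathbf A_\perp|^2|\mathbf A_\||^2|\bar L|^2$. The orientation assertion is a bookkeeping exercise showing that the wedge of unit tangent and unit normal of the traced curve is a positive multiple of the claimed $2$-vector.

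Part (4) is immediate from Eq.~(\ref{formulaE}) together with the identity in (\ref{Lbar}) that rewrites $|L|^2-\mu^2$ as $|\bar L|^2-|\bar L\wedge\mathbf A|^2$. Part (5) then follows at once: the coefficient relating $1-e^2$ to $1-|\mathbf A|^2$ in (3) is strictly positive (numerator by the definition of $\ms M$, denominator by the remark preceding the theorem), so $1-e^2$ and $1-|\mathbf A|^2$ share the same sign, while (4) makes $E$ and $1-|\mathbf A|^2$ have opposite signs. Hence $E<0,\ =0,\ >0$ exactly when $e<1,\ =1,\ >1$, i.e.\ when the conic is an ellipse, a parabola, or a branch of a hyperbola.

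For (1), the oriented orbit $\bar C$ determines $(\mathbf A,\bar L)$ from its intrinsic geometry: the cone over $\bar C$ with vertex at the origin recovers $\mathbf A_\perp$ in magnitude and direction (axis sign fixed by which sheet of the cone carries $\bar C$); the $2$D plane of the conic together with the focus at the origin recovers $\mathbf A_\|$ (eccentricity vector rescaled by the cone's geometry), $|\bar L|^2$ (semi-latus rectum and (\ref{orbit})), and $[\bar L]$ (the cone's base $2$-plane); finally the orientation of $\bar L$ is forced by the orientation claim of (3). For (2), injectivity of $\varphi_1$ is (1), and surjectivity is verified by taking any $(\mathbf A,\bar L)\in\ms M$, constructing the cone-and-plane conic dictated by (3), and then exhibiting a motion through any point of that conic with the appropriate initial velocity and lift to $P_\mu$ whose invariants are the given pair. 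The hardest single step throughout is the eccentricity identity in (3); all other parts reduce to routine, if careful, manipulations once (3) is in place.
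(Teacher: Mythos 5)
Your route to part (3) --- realizing the orbit as a plane section of the right circular cone $|\mathbf r_{\|}|^2=r^2(1-|\mathbf A_\perp|^2)$ inside the Euclidean $3$-space $[V]$ --- is genuinely different from the paper's, which works instead with the intersection of the future light cone in $\mathbb R^{1,2k+1}$ with the affine plane $m\wedge x=0$, $a\cdot x=1$, where $m=\bar L\wedge A/|\bar L\wedge A|$ and $a=A/|\bar L\wedge A|^2$ (and which, for (3) itself, simply defers to the three-dimensional computation of Ref.\ \cite{meng2012}). Your computation does check out: taking $\bar L=\ell\, e_1\wedge e_2$, $\mathbf A_{\|}=a\,\mathbf e_1$, $\mathbf A_\perp=c\,\mathbf e_3$, the cone-and-plane formula gives $e^2=a^2/\bigl((1-c^2)^2+a^2c^2\bigr)$, which reproduces the stated identity, and your orthogonal splitting of $|\bar L-\mathbf A\lrcorner(\mathbf A\wedge\bar L)|^2$ is correct. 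Parts (4) and (5) are handled exactly as in the paper. For part (1) your reconstruction of $(\mathbf A,\bar L)$ from the curve is workable, but note that for $\mu\neq 0$ the origin is the \emph{vertex} of the cone and does not lie in the plane of the conic, so it is not a focus of the planar curve (the focus is the Dandelin touch point); the paper's Lorentzian formulation is cleaner here, since the affine plane is just the affine span of the lifted orbit and immediately determines $(a,m)$ up to the orientation sign.

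The genuine gap is surjectivity in part (2), which you dispose of in one sentence but which occupies the bulk of the paper's proof. Given $(\mathbf A,\bar L)\in\ms M$ you must produce not only an initial position $\mathbf q$ and velocity $\mathbf v$ but also a point $\eta$ in some magnetic orbit $\mathcal O_\mu$ such that Eq.\ (\ref{check}) holds, and the curvature contribution $\mathbf v\lrcorner(i\eta,|\mathbf q|^2F(\mathbf q))$ is heavily constrained: by Lemma \ref{lemma} it is orthogonal to $\mathbf q$ and $\mathbf v$ and its length is tied to $\mu$, so it is not automatic that the transverse component $\mathbf A_\perp$ of the prescribed data can be realized by an admissible fiber datum. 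The paper does this explicitly: it chooses a unit vector $\mathbf n$ with $\bar L\wedge\mathbf n=\bar L\wedge\mathbf A$ and $|\mathbf A-\mathbf n|$ maximal, sets $\mathbf v=\frac{1}{|\bar L|^2}(\mathbf n-\mathbf A)\lrcorner\bar L$, $\mathbf q=\frac{1-\mathbf A\cdot\mathbf n}{|\mathbf v|^2}\mathbf n$, $\mathbf u=\mathbf v\lrcorner\bar L+(1-\mathbf A\cdot\mathbf n)\mathbf n$, and then exhibits $\eta=\frac{|\mathbf u|}{|\mathbf v|}\bigl(M_{12}+\cdots+\mathrm{sign}\,(\mathbf u\cdot\mathbf e_{2k-1})M_{2k-1,2k}\bigr)$, verifying $\mathbf v\lrcorner(i\eta,|\mathbf q|^2F(\mathbf q))=\mathbf u$ from the explicit field strength (\ref{Ffield}). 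In particular the magnetic charge $|\mu|=\sqrt k\,|\mathbf u|/|\mathbf v|$ is dictated by $(\mathbf A,\bar L)$, which is precisely why $\ms O$ must range over all magnetic charges. Without some such construction of $\eta$, your proof of (2) is incomplete.
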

\begin{proof}
\underline{Proof of part (1)}. It is better to use the reformulation for the oriented MICZ-Kepler orbits as oriented conic sections. It shall be shown in the next section that, an oriented MICZ-Kepler orbit for a motion with Lenz vector $\mathbf A$ and effective angular momentum $\bar L$ corresponds to the intersection of the future light cone in the Lorentz space $\mathbb R^{1, 2k+1}$ with the oriented plane 
$$
m\wedge x =0, \quad a\cdot x=1
$$ where 
$$
m={\bar L\wedge A\over |\bar L\wedge A|}, \quad a ={A\over |\bar L\wedge A|^2}
$$ with $A=(1, \mathbf A)$, moreover, $m$ is a decomposable $3$-vector with $m^2=1$, and $a$ is a vector inside the 3D space $[m]$ with $a_0>0$, and $a\lrcorner m$ represents the orientation of the plane. 

Suppose now if a motion with Lenz vector $\mathbf A_1$ and effective angular momentum $\bar L_1$ have the same oriented MICZ-Kepler orbit, then we would have the same oriented plane with this second defining equation:
$$
m_1\wedge x =0, \quad a_1\cdot x=1.
$$ Since this plane does not pass through the origin, it spans both the 3D vector subspace $[m]$ and the 3D vector subspace $[m_1]$, so $m_1=m$ or $-m$. But then we must have $a_1=a$, hence $m_1=m$ because both $a_1\lrcorner m_1$ and $a\lrcorner m$ represent the orientation of the plane. Since $(a, m)$ is uniquely determined by $(\mathbf A, \bar L)$, we have $\mathbf A_1=\mathbf A$ and $\bar L_1=\bar L$. In summary, the oriented plane, hence the oriented MICZ-Kepler orbit, is uniquely determined by the pair $(\mathbf A, \bar L)$. 

\underline{Proof of part (2)}. From part $(1)$ we have a well-defined map
 $$\varphi_{1}:  \mathscr{O}\rightarrow \mathscr{M},$$
 which sends each oriented MICZ-Kepler orbit to the pair $( \mathbf{A}, \bar{L})$ consisting of its unique Lenz vector $\mathbf A$ and effective angular momentum $\bar L$. It is clear from the proof of part (1) that $\varphi_{1}$ is one-to-one. It remains to show that $\varphi_{1}$  is onto, i.e., for a given pair $( \mathbf{A}, \bar{L})\in \ms M$, we need to find an initial data consisting of an initial position $\mathbf q$, an initial velocity $\mathbf v$ (which shall be chosen such that $\mathbf v \cdot \mathbf q=0$) and an initial point $\eta$ in an magnetic orbit such that
\begin{eqnarray}\label{check}
\left\{
\begin{array}{rcl}
\bar L &= & \mathbf q\wedge \mathbf v+{1\over |\mathbf v|^2}\mathbf v\wedge  (\mathbf v\lrcorner (i\eta, |\mathbf q|^2 F(\mathbf q)))\\
\\
\mathbf A & = & \mathbf v\lrcorner \bar L +{\mathbf q\over |\mathbf q|}. 
\end{array}\right.
\end{eqnarray}

Just as in the proof of part (3) of Theorem 1 in Ref. \cite{meng2012}, a key step is to find a {\em unit} vector $\mathbf n$ such that
$$
\bar L\wedge \mathbf n = \bar L\wedge \mathbf A, \quad\mbox{but}\quad \mathbf A\neq \mathbf n\quad \mbox{and $|\mathbf A-\mathbf n|$ is maximal possible}.
$$ The existence of such a unit vector $\mathbf n$ is guaranteed by the condition $|\bar L|^2 > |\bar L\wedge \mathbf A|^2$. To see this, we write $\mathbf A$ as the sum of the vector $\mathbf A_{||}\in [\bar L]$ and  vector $\mathbf A_{\perp}$ perpendicular to $[\bar L]$. Then the condition $|\bar L|^2 > |\bar L\wedge \mathbf A|^2$ is just $|\bar L|^2 > |\bar L|^2 |\mathbf A_\perp|^2$, which implies that $|\mathbf A_\perp|<1$. Now we just take $\mathbf n = \mathbf n_{||}+\mathbf A_\perp$ where $\mathbf n_{||}\in [\bar L]$ such that 
$|\mathbf n_{||}|^2 + |\mathbf A_\perp|^2 =1$, $\mathbf n_{||}\wedge \mathbf A_{||} =0$ and $\mathbf n_{||}\cdot \mathbf A_{||}\le 0$.   Clearly $\mathbf n$ is unique unless
$\mathbf A \lrcorner \bar L =0$. 

We are now ready to find the initial data. First, we let
\begin{eqnarray}
\mathbf v: ={1\over |\bar L|^2}(\mathbf n-\mathbf A) \lrcorner\bar L.
\end{eqnarray}

Since $\mathbf A-\mathbf n\neq 0$ and $(\mathbf A-\mathbf n)\wedge \bar L=0$, we have $\mathbf v\neq 0$, so we can let
\begin{eqnarray}
\mathbf q :={1-\mathbf A\cdot \mathbf n\over |\mathbf v|^2}\mathbf n.
\end{eqnarray}
Since $1-\mathbf A\cdot \mathbf n=1-|\mathbf A_\perp|^2-\mathbf n_{||}\cdot \mathbf A_{||}>0$, we have $\mathbf q\neq 0$ and $\mathbf q\cdot \mathbf v=0$, as we promised a few paragraphs above.  

With this choice of $\mathbf v$ and $\mathbf q$, one can check that the second identity in Eq. (\ref{check}) holds.
One can also check that, when $\bar L\wedge \mathbf A=0$, the first identity in Eq. (\ref{check}) holds if we
take $\eta:=0$. It remains to find $\eta$ for the case $\bar L\wedge \mathbf A\neq 0$ so that the first identity in Eq. (\ref{check}) holds.

When $\bar L\wedge \mathbf A\neq 0$, the vertical component $\mathbf n_\perp$ of $\mathbf n$ is nonzero, so
\begin{eqnarray}
\mathbf u:=\mathbf v\lrcorner \bar L+(1-\mathbf A\cdot \mathbf n)\mathbf n\neq 0.
\end{eqnarray} One can check that the nonzero vectors $\mathbf u$, $\mathbf v$ and $\mathbf n$ are mutually orthogonal, so we can assume that $\mathbf n=\mathbf e_{2k+1}$, $\mathbf v=|\mathbf v|\mathbf e_{2k}$, and $\mathbf u=|\mathbf u|\mathbf e_{2k-1}$ or $-|\mathbf u|\mathbf e_{2k-1}$.  By letting
\begin{eqnarray}
\eta := {|\mathbf u|\over |\mathbf v|}(M_{12}+\cdots +M_{2k-3, 2k-2}+\mathrm{sign}\,(\mathbf u\cdot \mathbf e_{2k-1}) M_{2k-1, 2k}),
\end{eqnarray} one can check that $\mathbf v\lrcorner (i\eta, |\mathbf q|^2 F(\mathbf q))=\mathbf u$ with the help of Eq. (\ref{Ffield}), therefore, the right-hand side of the first identity in Eq. (\ref{check}) becomes
$$
\mathbf q\wedge \mathbf v+{1\over |\mathbf v|^2}\mathbf v\wedge \mathbf u = {1\over |\mathbf v|^2}\mathbf v\wedge (\mathbf v\lrcorner \bar L) = \bar L,
$$ i.e., the left-hand side of the first identity in Eq. (\ref{check}).

\underline{Proof of parts (3), (4) and (5)}. Part (3) follows by adapting the proof for part (1) of Theorem 1 in Ref. \cite{meng2012}. Part (4) follows from Eqs. (\ref{formulaE}) and (\ref{Lbar}). Part (5) is a consequence of parts (3) and (4). 

\end{proof}
\section{Oriented MICZ-Kepler orbits and the Lorentz group}\label{S: Main2}
The goal here is to relate the MICZ-Kepler orbits to the Lorentz group, a phenomena initially found by the second author \cite{meng2012} for dimension three. 

As in Ref. \cite{meng2012}, the key is to go to the light cone formulation for the oriented MICZ-Kepler orbits. To do that, we write $x=(x_0, \mathbf r)$, $A=(1, \mathbf A)$, and verify by computation that $$(\bar L\wedge A, \bar L\wedge A)^2 = |L|^2-\mu^2,$$
a positive number, so $\bar L\wedge A\neq 0$. Then the MICZ-Kepler orbit as defined in Eq. (\ref{orbit}) can be reformulated as the intersection of the future light cone inside $\mathbb R^{1, 2k+1}$ with the affine plane
\begin{eqnarray}\label{conics3}
x\wedge (\bar L\wedge A)=0, \quad A\cdot x =|\bar L\wedge A|^2.
\end{eqnarray} This description of the MICZ-Kepler orbit is also valid when $\mu =0$, because in this special case $\bar L =L$ and Eq. (\ref{conics3}) is just Eq. (\ref{conics1}). 

We note that the affine plane defined by Eq. (\ref{conics3}) passes through the point $\bar L\lrcorner(\bar L\wedge A)$ and is parallel to the 2D subspace $[A\lrcorner (\bar L\wedge A)]$ of $\mathbb R^{1, 2k+1}$, that is because Eq. (\ref{conics3}) is equivalent to equation
\begin{eqnarray}\label{conics4}
(x-\bar L\lrcorner(\bar L\wedge A))\wedge (A\lrcorner (\bar L\wedge A))=0.
\end{eqnarray}

Since this affine plane is oriented by $A\lrcorner (\bar L\wedge A)$ and has a positive $x_0$-intercept, its intersection with the future light cone is an oriented conic section. The natural projection from $\mathbb R^{1, 2k+1}$ onto $\mathbb R^{2k+1}$ provides an orientation preserving diffeomorphism from this oriented conic section onto the oriented MICZ-Kepler orbit as defined via Eq. (\ref{orbit}).

\vskip 10pt
It is geometrically more convenient to rewrite equation (\ref{conics3}) as
\begin{eqnarray}\label{conics2}
m\wedge x =0,\quad a\cdot x = 1
\end{eqnarray} and equation (\ref{conics4}) as
\begin{eqnarray} 
(x-{(e_0\lrcorner m)\lrcorner m\over a_0})\wedge (a\lrcorner m)=0.
\end{eqnarray}
where $m={\bar L\wedge A\over |\bar L\wedge A|}$, $a ={A\over |\bar L\wedge A|^2}$. It is easy to see that the pair $(a, m)$ thus obtained satisfies the following condition:
{\em $m$ is a decomposable $3$-vector in the Lorentz space $\mathbb R^{1, 2k+1}$ such that $m^2=1$, and $a=(a_0, \mathbf a)$ is a vector in the Lorentz space $\mathbb R^{1, 2k+1}$  with $a_0>0$ and $a\wedge m=0$}. 

To proceed, we let $\mathcal M$ be the set of all pairs $(a, m)$ with the conditions we have just specified in the last paragraph. We remark that an element of $\mathcal M$ is just an oriented 3D Lorentz subspace [m] of $\mathbb R^{1, 2k+1}$ together with a vector $a\in [m]$ with a positive temporal component, moreover, the map $(a, m)\mapsto m$ is a fiber bundle map
with the $\mathcal M$ as the total space, the upper half 3D Lorentz space
$$\mathbb R^{1,2}_+:=\{(x_0, \mathbf x)\in \mathbb R^{1,2} \mid x_0>0\}$$
as the fiber, and the space $\widetilde{\mathrm{Gr}}_{1,2}(\mathbb R^{1, 2k+1})$ consisting of oriented 3D Lorentz subspaces in $\mathbb R^{1, 2k+1}$ as the base space.

For any $(a, m)\in \mathcal M$, we note that $a\lrcorner m\neq 0$, a fact one may check under the assumption that $m=e_0\wedge e_1\wedge e_2$. It is not hard to see that the map
\begin{eqnarray}\varphi_2: \quad\mathcal{M}&\longrightarrow &\mathscr{M}\\
(a, m)&\mapsto& \left (\frac{\mathbf{a}}{a_{0}}, \frac{e_{0}\lrcorner m}{\sqrt{a_{0}}}\right)
\end{eqnarray}
is a bijection whose inverse maps $(\mathbf{A}, \bar L)$ to
$$
\left({A\over |\bar L\wedge A|^2}, {\bar L\wedge A\over |\bar L\wedge A|}\right)
$$ where $A=(1, \mathbf A)$. By composing this bijection $\varphi_2$ with the bijection $\varphi_1$ in the part (2) of Theorem \ref{maintheorem1}, we get a bijection 
\begin{eqnarray}\label{bijection}
\varphi:\quad \ms O\to \mathcal M.
\end{eqnarray}

The following theorem extends Theorem 2 of Ref. \cite{meng2012} beyond dimension 3.
\begin{Thm}\label{maintheorem2}
\label{main2} Let $\ms O_+$ ($\ms O_0$ reps.) be the the set of oriented elliptic (parabolic reps.) MICZ-Kepler orbits, ${\mathcal M}_+ =\{(a, m)\in {\mathcal M}\mid a^2 > 0\}$,  and ${\mathcal M}_0 =\{(a, m)\in {\mathcal M}\mid a^2 = 0\}$.

(1) For the oriented MICZ-Kepler orbit  parametrized by $(a, m)\in \mathcal M$, we have the energy formula
$$
E=-{a^2\over 2a_0}.
$$

(2) The bijection $\varphi$ in Eq. (\ref{bijection}) maps ${\ms O}_+$ onto $\mathcal M_+$  and ${\ms O}_0$ onto $\mathcal M_0$.

(3) The action of       $~{\mr {SO}}^+(1,2k+1)\times {\bb R}_+$ on $\mathcal M_0$ defined by $$(\Lambda, \lambda)\cdot (a, m) = (\lambda\cdot (\Lambda a), \Lambda m)$$ is transitive. So ${\mr {SO}}^+(1,2k+1)\times {\bb R}_+$, in fact ${\mr {SO}}^+(1,2k+1)$ also,  acts transitively on the set of oriented parabolic MICZ-Kepler orbits.

(4) The action of   $~{\mr {SO}}^+(1,2k+1)\times {\bb R}_+$ on $\mathcal M_+$ defined by $$(\Lambda, \lambda) \cdot (a, m) = (\lambda\cdot \Lambda a, \Lambda m)$$ is transitive. So ${\mr {SO}}^+(1,2k+1)\times {\bb R}_+$ acts transitively on the set of oriented elliptic MICZ-Kepler orbits.

(5) The action in parts (3) and (4) extends to ${\mr {O}}^+(1,2k+1)\times {\bb R}_+$.

\end{Thm}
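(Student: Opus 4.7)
The plan is to prove everything through the light-cone dictionary $\varphi = \varphi_{2}^{-1}\circ\varphi_{1}$ just assembled, which replaces an oriented MICZ-Kepler orbit by the pair $(a,m)\in\mathcal{M}$ with $a = A/|\bar L\wedge A|^{2}$ and $m = \bar L\wedge A/|\bar L\wedge A|$, where $A = (1,\mathbf A)$. Under this dictionary the Lorentz group acts naturally and the whole theorem splits into (i) expressing $E$ in terms of $(a,m)$, so that the partition of $\ms O$ by conic type matches the partition of $\mathcal M$ by the sign of $a^{2}$, and (ii) classifying the orbits of $\mathrm{SO}^{+}(1,2k+1)\times \mathbb R_{+}$ on $\mathcal M$.

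Part (1) is a short computation. From the explicit formula for $\varphi_{2}^{-1}$ one reads off
$$
a_{0} = \frac{1}{|\bar L\wedge A|^{2}}, \qquad a^{2} = \frac{A^{2}}{|\bar L\wedge A|^{4}} = \frac{1-|\mathbf A|^{2}}{|\bar L\wedge A|^{4}},
$$
so $a^{2}/a_{0} = (1-|\mathbf A|^{2})/|\bar L\wedge A|^{2}$. The Lorentz identity $|\bar L\wedge A|^{2} = |\bar L|^{2} - |\bar L\wedge \mathbf A|^{2}$ (obtained by expanding $A = e_{0} + \mathbf A$, noting that $\bar L\wedge e_{0}$ and $\bar L\wedge \mathbf A$ are Lorentz-orthogonal, and tracking the sign $(-1)^{k}$ picked up when a Euclidean $k$-vector is paired in the Lorentz metric) combined with Theorem \ref{maintheorem1}(4) then yields $E = -a^{2}/(2a_{0})$. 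Part (2) is immediate from part (1) together with Theorem \ref{maintheorem1}(5): the conic type of an orbit is determined by $\mathrm{sgn}(E) = -\mathrm{sgn}(a^{2})$.

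For parts (3)--(5) the key geometric input is that $\mathrm{SO}^{+}(1,2k+1)$ acts transitively on the oriented Grassmannian $\widetilde{\mathrm{Gr}}_{1,2}(\mathbb R^{1,2k+1})$: Witt's extension theorem gives transitivity of $\mathrm{O}(1,2k+1)$ on unoriented time-like $3$D subspaces, and the stabilizer of an oriented one inside $\mathrm{SO}^{+}(1,2k+1)$ is isomorphic to $\mathrm{SO}^{+}(1,2)\times\mathrm{SO}(2k-1)$, making the homogeneous quotient connected. Transitivity lets one assume $m = e_{0}\wedge e_{1}\wedge e_{2}$, after which $a$ lies in $\mathbb R^{1,2}$ with $a_{0}>0$ and the residual action on $a$ is by $\mathrm{SO}^{+}(1,2)\times \mathbb R_{+}$. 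That group is classically transitive on the future null cone minus the origin (an appropriate null boost rescales any null vector, so the $\mathbb R_{+}$ factor is redundant in the parabolic case, proving (3)) and on the interior of the future cone (the $\mathbb R_{+}$ factor sweeps between the hyperboloids $\{a^{2}=c>0,\ a_{0}>0\}$ which $\mathrm{SO}^{+}(1,2)$ is already transitive on individually, proving (4)). Part (5) is then automatic, because every condition defining $\mathcal M_{0}$ and $\mathcal M_{+}$ (decomposability of $m$, $m^{2}=1$, $a_{0}>0$, $a\wedge m = 0$, $\mathrm{sgn}(a^{2})$) is preserved by the orthochronous group $\mathrm{O}^{+}(1,2k+1)$.

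The main obstacle is conceptual rather than technical: the Lorentz-versus-Euclidean conventions on poly-vectors are easy to misapply, and the entire energy reduction rests on the single identity $|\bar L\wedge A|^{2} = |\bar L|^{2} - |\bar L\wedge \mathbf A|^{2}$, which must be verified with careful attention to the signs arising from the embedding $\wedge^{k}\mathbb R^{2k+1}\hookrightarrow \wedge^{k}\mathbb R^{1,2k+1}$. Once this identity is in hand, the remainder is a one-line energy computation followed by a reduction to a transitivity statement in $\mathbb R^{1,2}$ that is well within classical Lorentz geometry.
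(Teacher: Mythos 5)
Your proposal is correct and follows essentially the same route as the paper: the energy formula comes from the explicit form of $\varphi_2^{-1}$ together with the identity $|\bar L\wedge A|^2=|\bar L|^2-|\bar L\wedge\mathbf A|^2$ and Theorem \ref{maintheorem1}(4), and the transitivity statements are reduced, after moving $m$ to a standard position, to classical transitivity of $\mathrm{SO}^+(1,2)\times\mathbb R_+$ on the future null cone and on the interior of the future cone in $\mathbb R^{1,2}$. The only difference is presentational: the paper rotates $m$ into $\mathbb R^{1,3}$ and delegates the residual transitivity to Theorem 2 of Ref. \cite{meng2012}, whereas you carry out that low-dimensional argument explicitly.
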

\begin{proof} Only the proof of transitivity in parts (3) and (4) needs a little effort. It is easy to see that, there are mutually orthogonal vectors $u$, $v$ and $w$ in $\mathbb R^{2k+1}$ such that $|u|<1$, $v\wedge w\neq 0$, and $m=(e_0+u)\wedge v\wedge w$. After a rotation of $\mathbb R^{2k+1}$, we may assume that $m$ is a decomposable $3$-vector in the Lorentz subspace $\mathbb R^{1,3}$. Then, the proof of transitivity descends to the proof of transitivity in Theorem 2 of Ref. \cite{meng2012}.

\end{proof}
We conclude this article with a remark: while the Lenz vector and the angular momentum determine an oriented orbit, it is the Lenz vector and the effective angular momentum that determine an oriented orbit in the ``external configuration space"; moreover, each oriented orbit is the lifting via the canonical connection (i.e., the generalized Dirac monopole) of an oriented orbit in the ``external configuration space".

\end{document}